\def\softd{{\leavevmode\setbox1=\hbox{d}%
\hbox to 1.05\wd1{d\kern-0.4ex{\char039}\hss}}}
\def\softt{{\leavevmode\setbox1=\hbox{t}%
\hbox to \wd1{t\kern-0.6ex{\char039}\hss}}}
\newcommand{\N}{\mathbb{N}}
\newcommand{\R}{\mathbb{R}}
\newcommand{\Z}{\mathbb{Z}}
\newcommand{\D}{\mathrm{d}}
\newcommand{\ee}{\mathrm{e}}
\newcommand{\imag}{\mathrm{i}}
\newcommand{\pd}{\partial}
\newcommand{\HH}{\mathcal{H}}
\newtheorem{theorem}{Theorem}[section]
\newtheorem{proposition}{Proposition}[section]
\begin{document}

\title[Spectrum of Dirichlet Laplacian in a conical layer]
{Spectrum of Dirichlet Laplacian in a conical layer
}

\author{Pavel Exner and Milo\v{s} Tater}
\address{Doppler Institute for Mathematical Physics and Applied
Mathematics, \\ B\v{r}ehov\'{a} 7, 11519 Prague, \\ and  Nuclear
Physics Institute ASCR, 25068 \v{R}e\v{z} near Prague, Czechia}
\ead{exner@ujf.cas.cz, tater@ujf.cas.cz}

\begin{abstract}
We study spectral properties of Dirichlet Laplacian on the conical
layer of the opening angle $\pi-2\theta$ and thickness equal to
$\pi$. We demonstrate that below the continuum threshold which is
equal to one there is an infinite sequence of isolated eigenvalues
and analyze properties of these geometrically induced bound
states. By numerical computation we find examples of the eigenfunctions.
\end{abstract}

\maketitle

\section{Introduction} \label{s: intro}

Geometrically induced bound states in infinitely stretched regions
with a hard-wall (or Dirichlet) boundary attracted a lot attention
in recent years, in particular, because they represent a purely
quantum effect which cannot be approached from the usual
semiclassical point of view. The attention was first paid to tubes
where the effective attraction is caused by bending \cite{ES89,
GJ92, DE95}, later the analogous question was addressed with one
dimension added, in curved layers \cite{DEK01, CEK04}. This
question is more complicated and to the date we do not have a full
analogue of the result saying that \emph{any} local bending of an
asymptotically straight tube exhibits such bound states. One
reason is that geometry of surfaces is more complicated than that
of curves and global properties play a more important role.

However, this is not the only difference between tubes and layers.
In the case of the former we have various solvable examples. They
often correspond to tubes with a non-smooth boundary being thus
beyond the assumptions under which the general results are proved
and illustrating the robustness of the effect; on the other hand
they can be composed of simple building blocks which facilitates
their solvability. In fact, some examples came already before the
general theory \cite{LLM86, ESS89, SRW89} and others were found in
parallel with it --- see, e.g., \cite{LCM99} for a review.

This is in contrast with our state of knowledge concerning the
layers. We have reasonably strong and general existence results
\cite{DEK01, CEK04}, later even extended to layers in higher
dimensions \cite{LL07}, as well a particular ``weak coupling''
result \cite{EK01}, but to the date no solvable example is
known\footnote{By solvability we do not mean finding the
eigenvalues and eigenfunctions analytically, of course, this is
beyond reach even in the tubes case. What we have in mind is the
possibility of bringing the analysis to the state when one can
find these quantities numerically and study their dependence on
the parameters of the model.}. The goal of the present work is to
discuss such an example in the geometrical situation which is
particularly simple --- apart of a trivial scaling the model
depends on a single parameter.

The layer to consider is the region bordered by two conical
surfaces shifted mutually along the cone axis by $\pi \csc\theta$. Using a natural separation of variables the angular component can be separated and the problem can be in each partial wave reduced to analysis of an appropriate operator on a skewed semi-infinite strip. This will be demonstrated in the next section and used in Sec.~3 to derive some properties of the corresponding discrete spectrum. To illustrate these results and to provide further insights into this binding mechanism, we present in the following section a numerical analysis of the problem: we calculate the eigenvalue dependence on the cone opening angle and present examples of eigenfunctions.

\section{The Hamiltonian}\label{s: hamiltonian}

\subsection{Basic definitions}
\label{ss: operator}

As we have said the object of our study is the Hamiltonian of a
quantum particle confined to a hard-wall conical layer of the
opening angle $\pi-2\theta$ for a fixed $\theta\in (0,
\frac12\pi)$, i.e.
 \begin{equation} \label{conlayer}
 \Sigma_\theta := \{ (r,\phi,z)\in\R^3:\: (r,z)\in
 \Omega_\theta,\, \phi\in [0,2\pi)\}\,,
 \end{equation}
where
 \begin{equation} \label{consection}
 \Omega_\theta := \{ (r,z)\in\R^2:\: r\ge 0,\, 0 < z-r
 \sin\theta < \pi \csc\theta\}\,.
 \end{equation}

\begin{figure}
   \begin{center}
     \includegraphics[width=27em]{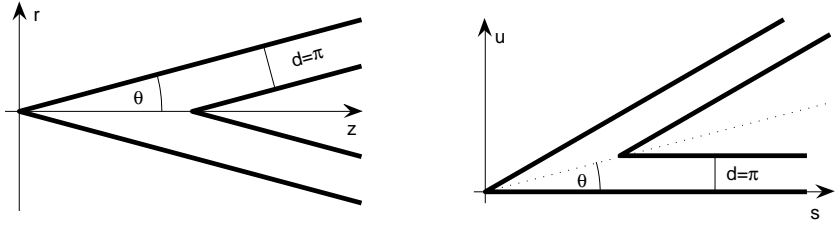}
     \caption{The problem geometry. The left part shows the
setting in cylindrical coordinates as described in Secs.~\ref{ss:
operator} and \ref{ss: partial}, the right part shows the cross section in rotated
coordinates $s,u$ used in Sec.~\ref{ss: alternative}.}\label{fig1}
   \end{center}
\end{figure}

\noindent Without loss of generality we may assume that the layer width is
$\pi$, the result for other widths being obtained by a trivial
scaling. Similarly we neglect values of physical constants and
identify the Hamiltonian $H^\theta$ with the Dirichlet Laplacian
$-\Delta_\mathrm{D}^{\Sigma_\theta}$ on the Hilbert space
$\HH=L^2(\Sigma_\theta,\, r\D r\D\phi\D z)$. Using the cylindrical
coordinates we can write the operator action explicitly,
 \begin{equation} \label{ham1}
 H^\theta\psi = -\frac{1}{r} \frac{\pd}{\pd r}\left(r \frac{\pd\psi}{\pd
 r}\right) - \frac{1}{r^2} \frac{\pd^2\psi}{\pd\phi^2}
 - \frac{\pd^2\psi}{\pd z^2}\,,
 \end{equation}
with the domain consisting of all functions $\psi\in W^{1,2}
_\mathrm{loc} (\Sigma_\theta)$ such that $\psi(\vec x)=0$ for
$\vec x= (r,\phi,z)\in \pd\Sigma_\theta$ and $H\psi\in L^2$. The operator is associated with the quadratic form
 \begin{equation} \label{hamform1}
 Q^\theta[\psi] = \int_{\Sigma_\theta} \left[ \left| \frac{\pd\psi}{\pd
 r} \right|^2 + \frac{1}{r^2} \left| \frac{\pd\psi}{\pd
 \phi} \right|^2 + \left| \frac{\pd\psi}{\pd z} \right|^2
 \right](r,\phi,z)\, r\D r\D\phi\D z
 \end{equation}
defined on functions $\psi\in W^{1,1} _\mathrm{loc}
(\Sigma_\theta)$ vanishing at the layer boundary and such that
$Q^\theta[\psi] < \infty$. It is well known
\cite[Sec.~XIII.15]{RS78} that the operator $H$ is self-adjoint
and positive, $\sigma(H) \subset [0,\infty)$.

\subsection{Partial wave decomposition} \label{ss: partial}

Using the cylindrical symmetry of the system we write state
Hilbert space as
 \begin{equation} \label{partial1}
 \HH = L^2(\Omega_\theta,\, r\D r\D z) \otimes L^2(S_1) =
 \bigoplus_{m\in\Z} L^2(\Omega_\theta,\, r\D r\D z)\,,
 \end{equation}
where $L^2(S_1)$ refers conventionally to functions on the unit
circle with the orthonormal basis $\{ \frac{1}{2\pi} \ee^{\imag
m\phi}:\: m\in\Z\}$. The operator $H^\theta$ decomposes
accordingly as
 \begin{equation} \label{partial2}
 H^\theta = \bigoplus_{m\in\Z} H^\theta_m\,, \quad H^\theta_m\psi
 = -\frac{1}{r} \frac{\pd}{\pd r}\left(r \frac{\pd\psi}{\pd r}\right)
 - \frac{\pd^2\psi}{\pd z^2} + \frac{m^2}{r^2}\psi \,,
 \end{equation}
with the domain consisting of $\psi\in W^{1,2} _\mathrm{loc}
(\Omega_\theta)$ satisfying the requirement $\psi(r,
r\sin\theta)=\psi(r, r\sin\theta+ \pi\tan\theta)=0$ and such that
$H_m\psi\in L^2$. Under these conditions the operators
$H^\theta_m,\, m\ne 0,$ are self-adjoint while for $m=0$ one has
to demand additionally that $\left.\frac{\pd\psi}{\pd r}
\right|_{r=0} =0$. All the boundary values here are understood as
the appropriate limits. The quadratic forms associated with the
partial wave components of the Hamiltonian are
 \begin{equation} \label{partialform1}
 Q^\theta_m[\psi] = \int_{\Sigma_\theta} \left[ \left| \frac{\pd\psi}{\pd
 r} \right|^2 + \left| \frac{\pd\psi}{\pd z} \right|^2 + \frac{m^2}{r^2}
 |\psi|^2 \right](r,z)\, r\D r\D z
 \end{equation}
being defined on all $\psi\in W^{1,1} _\mathrm{loc}
(\Omega_\theta)$ satisfying $\psi(r, r\sin\theta)=\psi(r,
r\sin\theta+ \pi\csc\theta)=0$ and such that $Q^\theta_m[\psi] <
\infty$.

As usual we can pass to unitarily equivalent operators by means of
the trans\-formation $U:\: L^2(\Omega_\theta,\, r\D r\D z) \to
L^2(\Omega_\theta)$ defined by $(U\psi)(r,z):= r^{1/2} \psi(r,z)
\equiv \tilde\psi(r,z)$ and an analogous map $L^2(\Sigma_\theta,\,
r\D r\D\phi\D z) \to L^2(\Sigma_\theta)$. In particular, the
operator $\tilde H^\theta_m:= UH^\theta_m U^{-1}$ acts as
 \begin{equation} \label{partial3}
 \tilde H^\theta_m\tilde\psi = - \frac{\pd^2\tilde\psi}{\pd r^2}
 - \frac{\pd^2\tilde\psi}{\pd z^2} + \frac{1}{r^2}
 \left(m^2-\frac14 \right) \tilde\psi
 \end{equation}
on functions $\tilde\psi\in W^{1,2}_\mathrm{loc} (\Omega_\theta)$
such that $\tilde H^\theta_m\tilde\psi\in L^2$ satisfying the
Dirichlet boundary condition at the boundary of $\Omega_\theta$;
the associated quadratic form is
 \begin{equation} \label{partialform2}
 \tilde Q^\theta_m[\tilde\psi] = \int_{\Sigma_\theta}
 \left[ \left| \frac{\pd\tilde\psi}{\pd r} \right|^2 +
 \left| \frac{\pd\tilde\psi}{\pd z} \right|^2 + \frac{1}{r^2}
 \left(m^2-\frac14 \right) |\tilde\psi|^2 \right](r,z)\, \D r\D z
 \end{equation}
with the domain of all $\tilde\psi\in W^{1,1} _\mathrm{loc}
(\Omega_\theta)$ satisfying the Dirichlet condition at
$\pd\Omega_\theta$ and such that $Q^\theta_m[\tilde\psi] <
\infty$.

\subsection{Alternative expressions} \label{ss: alternative}

There are other ways how to express action of our Hamiltonian. For
instance, one can use rotated coordinates in the radial cross
section plane according to Fig.~\ref{fig1},
 \begin{equation} \label{coord}
 s = r\cos\theta + z\sin\theta\,, \quad u = -r\sin\theta + z\cos\theta\,,
 \end{equation}
in which we have
 \begin{equation} \label{consection2}
 \Omega_\theta := \{ (s,u)\in\R^2:\: s\ge 0,\, 0 < u <
 \min( \pi, s\cot\theta) \}\,.
 \end{equation}
The partial wave operators act now as
 \begin{equation} \label{partial4}
 \tilde H^\theta_m\tilde\psi = - \frac{\pd^2\tilde\psi}{\pd s^2}
 - \frac{\pd^2\tilde\psi}{\pd u^2}
 + \frac{4m^2-1}{4(s\cos\theta - u\sin\theta)^2}\, \tilde\psi\,,
 \end{equation}
the domain being the same as above; the associated quadratic form
takes in these coordinates the shape
 \begin{equation} \label{partialform3}
 \hspace{-2em} \tilde Q^\theta_m[\tilde\psi] = \int_{\Sigma_\theta}
 \left[ \left| \frac{\pd\tilde\psi}{\pd s} \right|^2 +
 \left| \frac{\pd\tilde\psi}{\pd u} \right|^2
 + \frac{4m^2-1}{4(s\cos\theta - u\sin\theta)^2}
 |\tilde\psi|^2 \right](s,u)\, \D s\D u\,.
 \end{equation}
One can also use skew coordinates,
 \begin{equation} \label{skewcoord}
 y = s -u\tan\theta\,, \quad v = u\,,
 \end{equation}
in which the expression (\ref{partialform3}) acquires the form
 \begin{eqnarray}
 \hspace{-2em} \tilde Q^\theta_m[\tilde\psi] &\!=\!& \int_0^\infty \D y
 \int_0^\pi \D v \left[ \frac{1}{\cos^2\theta} \left| \frac{\pd\tilde\psi}{\pd y} \right|^2 +
 \left| \frac{\pd\tilde\psi}{\pd v} \right|^2
 + 2\tan\theta \: \mathrm{Re} \left( \frac{\pd\bar{\tilde\psi}}{\pd y}
 \frac{\pd\tilde\psi}{\pd v} \right)\right. \nonumber \\ && \hspace{-2em}
 \left. \phantom{\left|\frac{\pd\tilde\psi}{\pd r} \right|^2} + \frac{4m^2-1}{4
 y^2\cos^2\theta}\,
 |\tilde\psi|^2 \right](y,v)\,.
 \label{partialform4}
 \end{eqnarray}

\section{Properties of the spectrum}\label{s: properties}

Since the conical layer is built over a surface with a cylindrical
end its basic spectral properties follow from the theory developed
on \cite{DEK01, CEK04}.
\begin{theorem} \label{disc_spec}
For any fixed $\theta\in(0,\frac12\pi)$ we have
$\sigma_\mathrm{ess}(H^\theta)= [1,\infty)$ while the discrete spectrum
of the operator is non-empty with $\dim \sigma_\mathrm{disc}
(H^\theta) = \infty$.
\end{theorem}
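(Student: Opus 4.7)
I would prove the two assertions of the theorem separately, first pinning down the essential spectrum by a standard Weyl-sequence / Neumann-bracketing pair, and then exploiting the attractive inverse-square tail in the $m=0$ sector to produce infinitely many bound states.

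\textbf{Location of the essential spectrum.} In the rotated coordinates $(s,u)$ of Section \ref{ss: alternative} the region $\Omega_\theta$ agrees with the half-strip $(\pi\tan\theta,\infty)\times(0,\pi)$ above $s=\pi\tan\theta$, and the transformed partial-wave operator is $\tilde H^\theta_m=-\pd_s^2-\pd_u^2+(4m^2-1)[4(s\cos\theta-u\sin\theta)^2]^{-1}$ with Dirichlet data on the strip sides. For any $\lambda\ge 1$, set $k=\sqrt{\lambda-1}$ and feed to Weyl's criterion the singular sequence
\[
\tilde\psi_n(s,u)=\chi_n(s)\,\ee^{\imag ks}\sin u,
\]
with $\chi_n\in C_0^\infty(\R)$ of unit $L^2$-norm and support shifted to infinity. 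The only non-cancelling contributions to $(\tilde H^\theta_0-\lambda)\tilde\psi_n$ come from $\chi_n''$ and from the bounded potential term, which is $O(s^{-2})$ uniformly on the supports, so $\lambda\in\sigma_{\mathrm{ess}}(H^\theta)$. The reverse inclusion I would obtain by Neumann bracketing: cutting $\Omega_\theta$ at $s=s_0\gg\pi\tan\theta$ decouples a bounded piece with compact resolvent from a half-strip piece on which the transverse Dirichlet mode on $(0,\pi)$ has threshold exactly $1$ and the remaining scalar perturbation is bounded in absolute value by some $c(s_0)\to 0$; letting $s_0\to\infty$ gives $\sigma_{\mathrm{ess}}(H^\theta)\subseteq[1,\infty)$.

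\textbf{Infiniteness of the discrete spectrum.} The decisive feature is that in the channel $m=0$ the coefficient of the inverse-square tail is super-critical. Into the form (\ref{partialform3}) I would insert a product ansatz
\[
\tilde\psi(s,u)=f(s)\sin u,\qquad \mathrm{supp}\,f\subset(\pi\tan\theta,\infty).
\]
Using $\int_0^\pi\sin^2u\,\D u=\int_0^\pi\cos^2u\,\D u=\pi/2$ the transverse kinetic energy cancels exactly the threshold contribution, and what remains is
\[
\tilde Q^\theta_0[\tilde\psi]-\|\tilde\psi\|^2=\frac{\pi}{2}\int_{\pi\tan\theta}^\infty|f'(s)|^2\,\D s-\int_{\pi\tan\theta}^\infty|f(s)|^2\,W_\theta(s)\,\D s,
\]
with $W_\theta(s)=\int_0^\pi\sin^2u\,[4(s\cos\theta-u\sin\theta)^2]^{-1}\D u=\pi/(8s^2\cos^2\theta)+O(s^{-3})$. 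Thus the effective one-dimensional Schr\"odinger operator is $h_\theta=-\D^2/\D s^2-\alpha(\theta)/s^2+O(s^{-3})$ with $\alpha(\theta)=(4\cos^2\theta)^{-1}>1/4$ for every $\theta\in(0,\tfrac12\pi)$. Since the half-line operator $-\D^2/\D s^2-\alpha/s^2$ with super-critical coupling $\alpha>1/4$ possesses an infinite sequence of negative eigenvalues accumulating at zero, I can pick trial states $f_n$ supported on disjoint dyadic intervals $[2^n,2^{n+1}]$ on which $\tilde Q^\theta_0[\tilde\psi_n]<\|\tilde\psi_n\|^2$. The min-max principle applied to this mutually orthogonal family then gives $\dim\sigma_{\mathrm{disc}}(H^\theta)=\infty$.

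\textbf{Main obstacle.} The essential-spectrum part is routine once the conical end is straightened by the change of variables of Section \ref{ss: alternative}. The technical heart is the variational step: one must check that the product ansatz really produces an effective inverse-square tail whose coefficient strictly exceeds the Hardy-critical value $1/4$ for every admissible $\theta$, and control the subleading $u\sin\theta$ terms in $W_\theta$ so that the super-criticality is preserved on each of the dyadic shells used as the support of $f_n$. This is the step where the conical geometry makes an essential difference from a generic curved layer and where the accumulation of eigenvalues at the threshold is born; everything else is bookkeeping.
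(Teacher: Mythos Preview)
Your strategy is sound and matches the mechanism underlying the paper's proof, which largely defers to \cite{DEK01, CEK04}: a Weyl-sequence/bracketing pair for the essential spectrum, and mutually orthogonal trial functions supported on disjoint annular shells pushed to infinity for the discrete spectrum. Your version is in fact more explicit, since you isolate the super-critical inverse-square tail $\alpha(\theta)=(4\cos^2\theta)^{-1}>\tfrac14$ as the source of the infinite discrete spectrum, whereas the paper only gestures at this through the citation and the remark that the ring-shaped trial functions of \cite{DEK01, CEK04} eventually land below threshold.

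One concrete step, however, fails as written. The claim that trial states supported on dyadic intervals $[2^n,2^{n+1}]$ give $\tilde Q_0^\theta[\tilde\psi_n]<\|\tilde\psi_n\|^2$ is false for most $\theta$. By scale invariance the relevant quantity is
\[
\mu_1(L):=\inf\Big\{\int_1^L|f'|^2\,\D s\Big/\!\int_1^L|f|^2 s^{-2}\,\D s:\ f\in H_0^1(1,L)\Big\},
\]
and solving the associated Euler equation $-s^2f''=\mu f$ with Dirichlet data yields $\mu_1(L)=\tfrac14+\pi^2/(\log L)^2$. For $L=2$ this is about $20.8$, which exceeds $\alpha(\theta)$ unless $\theta$ is within a few degrees of $\tfrac{\pi}{2}$. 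The fix is immediate: use geometric shells $[L^n,L^{n+1}]$ with $\log L>2\pi\cot\theta$, so that $\mu_1(L)<\alpha(\theta)$; the $O(s^{-3})$ remainder in $W_\theta$ is then harmless for $n$ large, and the min-max argument goes through. With this correction your proof is complete and self-contained, which the paper's is not.
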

\begin{proof}
For most part of the argument we can refer to the papers quoted
above. In particular, we proved there that $\inf
\sigma_\mathrm{ess} (H^\theta)= 1$. In order to check that any
$k^2+1$ belongs to the spectrum one has to construct a suitable
Weyl sequence. Consider three dimensional Cartesian coordinates
$s,t,u$ and take the family of functions
 $$
 g_n(s,t,u):= \sqrt{\frac{2n}{n-2}}\: \ee^{\imag ks} f(ns,nt)\,
 \cos \frac{\pi nu}{n-2}
 $$
where $|u|\le \frac{n-2}{n}$ and $f\in C_0^\infty(\R^2)$ has the
support in $(-1,1)^2$. By construction $\|(-\Delta-k^2-1)g_n\|\to
0$ as $n\to\infty$, and since the vanishing quantity is invariant
w.r.t. Euclidean transformations we can find a family of functions
$\tilde g_n$ supported by rectangular blocks of the size $2n
\times 2n \times (1-\frac 2n)$, each of them contained in
$\Sigma_\theta$ being naturally further and further from the layer
tip as $n$ grows, which proves the claim.

The existence of the discrete spectrum was proven in \cite{DEK01,
CEK04}. While it is not stated explicitly there that the number of
the eigenvalues in $(0,1)$ is infinite, it is clear from the proof
in which one constructs a sequence of trial functions supported by
mutually disjoint ring-shaped domains of the conical layer, and
thus orthogonal to each other; for large enough indices they give
the value of the energy functional below the essential spectrum
threshold.
\end{proof}

After demonstrating the existence of the discrete spectrum let us
look at its properties in more detail. First we note that all the
geometrically induced bound states in the layer belong to the
$s$-wave sector.
\begin{proposition} \label{s-states}
$\sigma_\mathrm{disc} (H^\theta_m) = \emptyset$ holds if $m\ne 0$.
\end{proposition}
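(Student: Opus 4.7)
The plan is to prove the operator inequality $\tilde H^\theta_m \ge 1$ in the form sense for every $m\ne 0$, and combine it with the identification $\sigma_\mathrm{ess}(H^\theta_m) = [1,\infty)$ — which follows by the same Weyl-sequence construction as in Theorem \ref{disc_spec}, since the centrifugal potential $m^2/r^2$ decays along the conical end — to conclude that the spectrum of $H^\theta_m$ is contained in $[1,\infty)$ with no isolated point, leaving the discrete spectrum empty.

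To establish the form inequality I would work with the rotated coordinates $(s,u)$ of (\ref{coord}) and the quadratic form (\ref{partialform3}), because in this parametrization the $u$-slice of $\Omega_\theta$ at any fixed $s>0$ is an interval $(0,L(s))$ with $L(s):=\min(\pi, s\cot\theta)\le \pi$. On such a slice $\tilde\psi(s,\cdot)$ vanishes at both endpoints — this is simply the Dirichlet condition on $\pd\Omega_\theta$ read in the $(s,u)$ variables — so the one-dimensional Dirichlet Poincar\'e inequality yields
$$
\int_0^{L(s)} \Bigl|\frac{\pd\tilde\psi}{\pd u}\Bigr|^2 \D u \;\ge\; \left(\frac{\pi}{L(s)}\right)^{\!2}\! \int_0^{L(s)} |\tilde\psi|^2 \D u \;\ge\; \int_0^{L(s)} |\tilde\psi|^2 \D u .
$$
Integrating over $s\ge 0$ produces the global estimate $\int_{\Omega_\theta} |\pd_u \tilde\psi|^2 \D s\, \D u \ge \|\tilde\psi\|^2$.

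For $m\ne 0$ one has $4m^2-1\ge 3>0$, hence both the centrifugal term and the $s$-derivative term in (\ref{partialform3}) contribute pointwise nonnegatively. Dropping them and using the estimate above immediately gives
$$
\tilde Q^\theta_m[\tilde\psi] \;\ge\; \int_{\Omega_\theta} \Bigl|\frac{\pd\tilde\psi}{\pd u}\Bigr|^2 \D s\,\D u \;\ge\; \|\tilde\psi\|^2,
$$
which is the required operator inequality.

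The main point — and the reason the argument works — is the choice of coordinate system. In the original $(r,z)$ variables the $z$-slice at fixed $r$ has length $\pi\csc\theta>\pi$, so a naive slice-wise Poincar\'e bound would only yield the too-weak constant $\sin^2\theta<1$; the rotation to $(s,u)$ realigns the cross-section to length at most $\pi$, making the sharp constant $1$ accessible and in fact improved further in the tip region where $s\cot\theta<\pi$. The one lingering technical subtlety is to justify the vanishing of $\tilde\psi$ on the ``axis'' portion $u=s\cot\theta$ of $\pd\Omega_\theta$; for $m\ne 0$ this follows from finiteness of the form together with the repulsive centrifugal potential $\frac{4m^2-1}{4 r^2}$, which forces $\tilde\psi\to 0$ as $r\to 0$.
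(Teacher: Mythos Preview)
Your argument is correct and shares the paper's core idea: for $m\ne 0$ the centrifugal term in (\ref{partialform2})/(\ref{partialform3}) is nonnegative, so dropping it yields a lower bound whose bottom is $1$, which together with $\inf\sigma_\mathrm{ess}=1$ rules out discrete spectrum. The only difference is in how that lower bound is realised. The paper keeps both gradient terms and compares directly to the two-dimensional Dirichlet Laplacian $-\Delta_\mathrm{D}^{\Omega_\theta}$, invoking the known fact $\sigma(-\Delta_\mathrm{D}^{\Omega_\theta})=[1,\infty)$ and the minimax principle; you instead discard the longitudinal derivative as well and carry out the transverse Poincar\'e estimate by hand in the rotated coordinates. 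Your route is a touch more elementary in that it does not require separately knowing the spectrum of $-\Delta_\mathrm{D}^{\Omega_\theta}$, while the paper's version is coordinate-free and sidesteps your discussion of why the $(r,z)$ slicing would give too weak a constant. Both arguments rely on the same boundary subtlety at the axis $r=0$, which you flag explicitly and the paper leaves implicit in the form comparison $\tilde H^\theta_m\ge -\Delta_\mathrm{D}^{\Omega_\theta}$.
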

\begin{proof}
The claim is equivalent to $\sigma_\mathrm{disc} (\tilde
H^\theta_m) = \emptyset$. For any higher partial wave, $m\ne 0$, we have $\tilde Q^\theta_m[\tilde \psi] \ge \int_{\Omega_\theta}
|\vec\nabla \tilde\psi|^2(r,z)\, \D r\D z$ which means that
$\tilde H^\theta_m \ge -\Delta_\mathrm{D} ^{\Omega_\theta}$ in the form sense where the \emph{rhs} acts according to \cite[Sec.~XIII.15]{RS78} on functions $\psi\in W^{1,2}
_\mathrm{loc} (\Omega_\theta)$ such that $\psi(\vec x)=0$ for
$\vec x= (r,z)\in \pd\Omega_\theta$ and $-\Delta\psi\in L^2$. The result then follows by minimax principle because
$\sigma(-\Delta_\mathrm{D} ^{\Omega_\theta}) =
\sigma_\mathrm{ess}(-\Delta_\mathrm{D} ^{\Omega_\theta}) =
[1,\infty)$.
\end{proof}

This does not mean, of course, that the cylindrically asymmetric
part of the problem is trivial --- it can exhibit, for instance,
an interesting scattering behaviour. Since we are dealing with the
discrete spectrum here, however, we consider from now on only
functions independent of the angular variable $\phi$ and drop the
index $m$ in the quadratic form expressions. By
Theorem~\ref{disc_spec} the operator $H^\theta$ has a sequence of
eigenvalues,
 \begin{equation} \label{eigenval}
 \lambda_1(\theta) < \lambda_2(\theta) \le \lambda_3(\theta) \le
 \cdots\,,
 \end{equation}
labelled in the ascending order; the corresponding eigenfunctions
will be denoted as $\psi^{(j)}\in L^2(\Sigma_\theta),\:
j=1,2,\dots,$; with the symmetry in mind we use the same symbols
for their radial cuts $\psi^{(j)}\in L^2(\Omega_\theta,\, r\D r\D
z)$ and $\tilde\psi^{(j)}$ for their counterparts in
$L^2(\Omega_\theta)$.

The main question is about the dependence of the spectrum on the
cone opening angle. Let us start with demonstrating that the
eigenvalues (\ref{eigenval}) are smooth functions of $\theta$. To
this aim we compare the cross sections (\ref{consection2}) for two
different values of the angle, $\theta$ and $\theta'$. They can be
mapped onto each other by a longitudinal scaling, $s\cot\theta =
s'\cot\theta'$; then we can express the form (\ref{partialform3})
referring to the angle $\theta'$ as
 \begin{eqnarray}
 \tilde Q^{\theta'}[\tilde\psi] &\!=\!& \int_{\Sigma_\theta}
 \left[ \frac{\tan\theta}{\tan\theta'}
 \left|\frac{\pd\tilde\psi}{\pd s} \right|^2 +
 \frac{\tan\theta'}{\tan\theta}
 \left| \frac{\pd\tilde\psi}{\pd u} \right|^2 \right. \nonumber \\ &&
 \left. \phantom{\left|\frac{\pd\tilde\psi}{\pd r} \right|^2}
 \hspace{-2em} - \frac{\sin 2\theta}{\sin 2\theta'}\,
 \frac{|\tilde\psi|^2}{4(s\cos\theta - u\sin\theta)^2} \right](s,u)\,
 \D s\D u\,. \label{partialform3'}
 \end{eqnarray}
The integration is thus taken over the same domain and the angle
dependence is moved to the coefficients of the operator.

\begin{proposition}
The functions $\lambda_j(\cdot),\: j=1,2,\dots,$ are $C^1$ on
$(0,\frac12\pi)$.
\end{proposition}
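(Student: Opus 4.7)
The natural strategy is to apply Kato's analytic perturbation theory to the family of quadratic forms (\ref{partialform3'}). The essential preparatory step has already been taken in the discussion preceding the statement: the substitution $s\cot\theta = s'\cot\theta'$, combined with the square-root Jacobian rescaling of the wave function, unitarily transports $\tilde H^{\theta'}$ onto the \emph{fixed} Hilbert space $L^2(\Omega_\theta)$, and the $\theta'$-dependence of (\ref{partialform3'}) enters only through the three real-analytic scalar coefficients $\tan\theta/\tan\theta'$, $\tan\theta'/\tan\theta$ and $\sin 2\theta/\sin 2\theta'$, while the form domain (Dirichlet functions of finite energy on $\Omega_\theta$) is independent of $\theta'$.

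The first step would be to verify that $\theta'\mapsto \tilde Q^{\theta'}$ constitutes a holomorphic family of type (B) in Kato's sense: the three coefficients extend holomorphically to a complex neighbourhood of any $\theta_0\in(0,\frac12\pi)$; the anisotropic kinetic principal part stays uniformly coercive for $\theta'$ close to $\theta_0$; and the singular term proportional to $(s\cos\theta-u\sin\theta)^{-2}$ is form-bounded by the gradient part with a relative bound that varies only slightly with $\theta'$. Consequently all forms are closed and sectorial on a common domain and depend holomorphically on $\theta'$.

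With type (B) at hand, Kato's analytic perturbation theorem asserts that every discrete eigenvalue of $\tilde H^{\theta'}$ lying below the essential spectrum threshold --- which equals $1$ for every $\theta'\in(0,\frac12\pi)$ by Theorem~\ref{disc_spec} --- is locally a real-analytic function of $\theta'$. The ordered labelling $\lambda_j(\theta)$ then inherits real-analyticity, and in particular $C^1$ regularity, at any $\theta_0$ where the eigenvalue is simple; at a crossing point one combines the local analytic branches with the continuity (by the minimax principle) of the ascending-order enumeration.

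The main obstacle I anticipate is twofold. First, one must establish the relative form-boundedness of the singular potential uniformly on compact $\theta'$-subintervals of $(0,\frac12\pi)$; this amounts to a Hardy-type inequality on $\Omega_\theta$ with Dirichlet conditions, which is however essentially implicit in the self-adjointness and lower semiboundedness of $\tilde H^\theta$ itself, so the work reduces to tracking the $\theta'$-dependence of the relevant constants. Second, and more delicate, eigenvalue crossings in the ascending-order labelling would in general violate strict $C^1$ regularity, so one either needs a local matching-of-derivatives argument at any crossing or an a priori simplicity/non-crossing statement for the $s$-wave eigenvalues; this is the point of the plan that genuinely requires care beyond a routine invocation of Kato's machinery.
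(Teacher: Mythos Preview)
Your proposal is sound and takes a genuinely different route from the paper. The paper proceeds directly via the Feynman--Hellmann theorem: it differentiates (\ref{partialform3'}) with respect to $\theta'$ at $\theta'=\theta$, writes down the resulting explicit integral for $\lambda_j'(\theta)$, and then checks term by term that this integral is finite --- the two gradient contributions because $\tilde\psi^{(j)}\in W^{1,2}_{\mathrm{loc}}(\Omega_\theta)$, and the singular potential contribution because $\tilde Q^\theta[\tilde\psi^{(j)}]<\infty$ forces it to be; continuity of the integrand in $\theta$ together with dominated convergence then yields $C^1$. Your Kato type-(B) argument is more systematic and, once the hypotheses are verified, delivers the stronger conclusion of real-analyticity of the eigenvalue branches; the paper's route avoids verifying the uniform relative form-bound and sectoriality needed for type (B), trading that machinery for a bare-hands finiteness check. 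Both arguments rest on the same preliminary unitary reduction to a fixed domain, and both face the crossing issue you flag: the paper's Feynman--Hellmann step is itself only justified at simple eigenvalues, so the difficulty you identify as requiring care is equally present in the paper's proof and is simply not addressed there.
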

\begin{proof}
To find the derivative of $\lambda_j(\cdot)$ we have
Feynman-Hellmann theorem differentiate (\ref{partialform3'}) with
respect to $\theta'$ at $\theta'=\theta$. It yields
 $$
 \lambda'_j(\theta) = \int_{\Sigma_\theta}
 \left[ -\frac{2}{\sin 2\theta}
 \left|\frac{\pd\tilde\psi^{(j)}}{\pd s} \right|^2 +
 \frac{2}{\sin 2\theta} \left| \frac{\pd\tilde\psi^{(j)}}{\pd u} \right|^2
 + \frac{|\tilde\psi^{(j)}|^2\cot 2\theta}{2(s\cos\theta - u\sin\theta)^2}
 \right](s,u)\,
 \D s\D u\,.
 $$
Since $\tilde\psi^{(j)}\in W^{1,1}_\mathrm{loc} (\Omega_\theta)$
--- in fact, it belongs to $W^{1,2}_\mathrm{loc}$ --- the first two
contributions to the integral are finite. Combining this with the
fact that $Q^\theta_m[\tilde\psi^{(j)}] < \infty$ we infer that
also the last contribution to the integral is finite, so the
derivative exists. Using then the integrand continuity with
respect to $\theta$ and the dominated convergence theorem, we
arrive at the result.
\end{proof}

Note that the terms appearing in the last expression have
different signs so it is not apriori clear whether the eigenvalue
behave monotonously as the angle changes. Nevertheless, the
geometrically induced binding becomes stronger as $\theta$
increases, in particular, we have the following result.

\begin{proposition} \label{lowerb}
Set $\lambda_0:= \pi^{-2} j_{0,1}^2\approx 0.58596$ where
$j_{0,1}$ is the first zero of Bessel function $J_0$. To any
$N\in\N$ and $\bar\lambda\in (\lambda_0,1)$ there is a $\theta_0$
such that $H^\theta$ has at least $N$ eigenvalues below
$\bar\lambda$ for any $\theta\in (\theta_0, \frac12\pi)$.
\end{proposition}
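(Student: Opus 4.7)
We argue by the min-max principle: it suffices to exhibit, for every $\theta$ sufficiently close to $\frac12\pi$, an $N$-dimensional subspace of the form domain of $H^\theta$ on which the Rayleigh quotient $Q^\theta[\psi]/\|\psi\|^2$ stays strictly below $\bar\lambda$.

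The geometric ingredient driving the construction is that, as $\theta \to \frac12\pi$, the layer $\Sigma_\theta$ accommodates arbitrarily many mutually disjoint solid circular cylinders of any fixed radius $R < \pi$ and any fixed height $L > 0$, centred on the axis of symmetry and stacked along it. This follows from the boundary equations of $\Sigma_\theta$: the horizontal cross section at height $z$ is a disk whose radius grows linearly with $z$ up to the transition from the disk part to the annular part of $\Omega_\theta$, and at that transition it attains a value that tends to $\pi$ from below while the axial extent of the disk phase itself diverges as $\theta \to \frac12\pi$.

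Since $\bar\lambda > \lambda_0 = (j_{0,1}/\pi)^2$, pick $R < \pi$ and $L > 0$ with
\begin{equation*}
 \kappa := \left(\frac{j_{0,1}}{R}\right)^{\!2} + \left(\frac{\pi}{L}\right)^{\!2} < \bar\lambda ;
\end{equation*}
the value $\kappa$ is the lowest Dirichlet eigenvalue of $-\Delta$ on a solid cylinder of radius $R$ and height $L$, attained by the rotationally symmetric function $J_0(j_{0,1} r/R)\sin(\pi z/L)$. Choose $\theta_0 \in (0, \frac12\pi)$ so that for every $\theta \in (\theta_0, \frac12\pi)$ the layer $\Sigma_\theta$ contains $N$ pairwise disjoint such cylinders $C_1, \dots, C_N$ centred on the axis at successive axial positions $z_1 < \dots < z_N$, and on each $C_n$ put
\begin{equation*}
 \psi_n(r, \phi, z) := J_0\!\left(\frac{j_{0,1}\, r}{R}\right) \sin\!\left(\frac{\pi(z - z_n)}{L}\right) ,
\end{equation*}
extended by zero outside $C_n$. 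Each $\psi_n$ is independent of $\phi$, vanishes on $\pd C_n$, and satisfies the Dirichlet condition at $\pd\Sigma_\theta$ trivially since $\overline{C_n} \subset \Sigma_\theta$; hence $\psi_n$ lies in the form domain of $H^\theta$, and because it is the $s$-wave Dirichlet ground state on the cylinder $C_n$ one has $Q^\theta[\psi_n] = \kappa\,\|\psi_n\|^2$.

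The functions $\psi_1, \dots, \psi_N$ have pairwise disjoint supports, so they are mutually orthogonal and the quadratic form is diagonal on their linear span; consequently $Q^\theta[\psi] = \kappa\,\|\psi\|^2 < \bar\lambda\,\|\psi\|^2$ for every nonzero $\psi$ in this $N$-dimensional subspace. The min-max principle then yields $\lambda_N(\theta) \le \kappa < \bar\lambda$, which is the claim. The only non-routine step is the geometric verification that $N$ such cylinders fit into $\Sigma_\theta$ once $\theta$ is close enough to $\frac12\pi$; this is an elementary but explicit computation from the boundary equations of $\Sigma_\theta$ that rests on the divergence of the axial extent of the disk phase of the cross section as $\theta \to \frac12\pi$.
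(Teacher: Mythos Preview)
Your proof is correct and follows essentially the same route as the paper: both arguments embed axial solid cylinders of radius $R<\pi$ into $\Sigma_\theta$, use the explicitly known Dirichlet spectrum of the cylinder, and invoke min--max (equivalently, Dirichlet bracketing) together with the geometric fact that the available axial length $\pi\sec\theta - R\tan\theta = (\pi - R\sin\theta)/\cos\theta$ diverges as $\theta\to\frac12\pi$. The only cosmetic difference is that the paper uses a \emph{single} cylinder and takes its first $N$ longitudinal modes, yielding the condition $\lambda_0(\pi/R)^2 + (\pi N/L)^2 < \bar\lambda$, whereas you stack $N$ disjoint cylinders and take the ground state of each; replacing $L$ by $NL$ makes the two constructions identical.
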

\begin{proof}
By Dirichlet bracketing \cite[Sec.~XIII.15]{RS78} it is sufficient
to find a region $\Sigma\subset \Sigma_\theta$ such that
$-\Delta_\mathrm{D}^{\Sigma}$ has this property. Consider a cylinder of radius $R<\pi$ and length $L$. The eigenvalues of the corresponding Dirichlet Laplacian are easily found; it yields the condition
 $$
 \lambda_0 \left( \frac{\pi}{R}\right)^2 + \left( \frac{\pi
 N}{L}\right)^2 < \bar\lambda\,.
 $$
We embed the cylinder axially into $\Sigma_\theta$ with one lid touching the tip of the inner surface; when $\theta$ becomes close enough to $\frac12\pi$ we can chose $R$ close to $\pi$ and $L$ large so that the above condition is satisfied.
\end{proof}



\section{Numerical results}\label{s: numerical}

In order to get a more detailed information and a better insight let us study next the spectrum numerically. As a consequence of Proposition~\ref{s-states}, we may limit ourselves to two--dimensional geometry, \emph{i.e.} to consider the spectrum of (\ref{partial2}) for $m=0$. Recall that the corresponding eigenfunctions are independent of the angular variable $\phi$ so we can choose $\phi=0$ arriving at the two--dimensional problem depicted in Fig.~\ref{fig1}. Note also that the alternative formulation (\ref{partial4}) leads to the same results.

The first difficulty we meet is the absence of reliable estimates
concerning the span of eigenfunctions. The knowledge of the region
where the eigenfunctions are negligible is important. We have to
carry out integration over an infinite region and we need to cut
the integration somewhere, unless we transform the problem to a
finite region. This, however, brings problems of other kinds
ruling out, in particular, use of the mode-matching approach.

From general considerations we know that the smaller the gap between the eigenvalue and the continuum threshold, of course, the farther is the wavefunction stretched. Also, for small opening angles eigenfunctions spread farther. This rule of thumb may be of help as a rough estimate of integration region.

\subsection{The method used} \label{ss: method}

We must admit that we have not found a method which would be
efficient enough and intuitive at the same time. To summarize our
experience we prefer finite element methods. In view of the
absence of knowledge of a practical span of configuration space
mentioned above even this approach requires separate treating of
respective levels. In order to minimize the number of elements we
took further advantage of symmetry and limited the problem to the
region shown in the left part of Fig.~\ref{fig2} with the
appropriate boundary conditions. The sought eigenvalues of the
conical layer are then eigenvalues of the operator
(\ref{partial2}) with $m=0$ on this region.

We used 2nd order elements. The total number of degrees of freedom
ranged from 50000 to about 300000 depending on eigenfunction
extent. All runs for small $\theta$ (most demanding) took less
than one minute each on an Intel Core 2 (1.86 GHz).

We have also tried to employ the Galerkin method. The idea was to
choose a set of test functions satisfying the boundary conditions.
This, however, leaves still leeway in representing the subspace to
which the problem is projected, \emph{i.e.} how to choose the
particular form of the test functions. The choices we attempted
led to substantially more time consuming calculation than FEM, so
fitting test functions are to be found yet.

\subsection{The results} \label{ss: results}

\begin{figure}
   \begin{center}
     \includegraphics{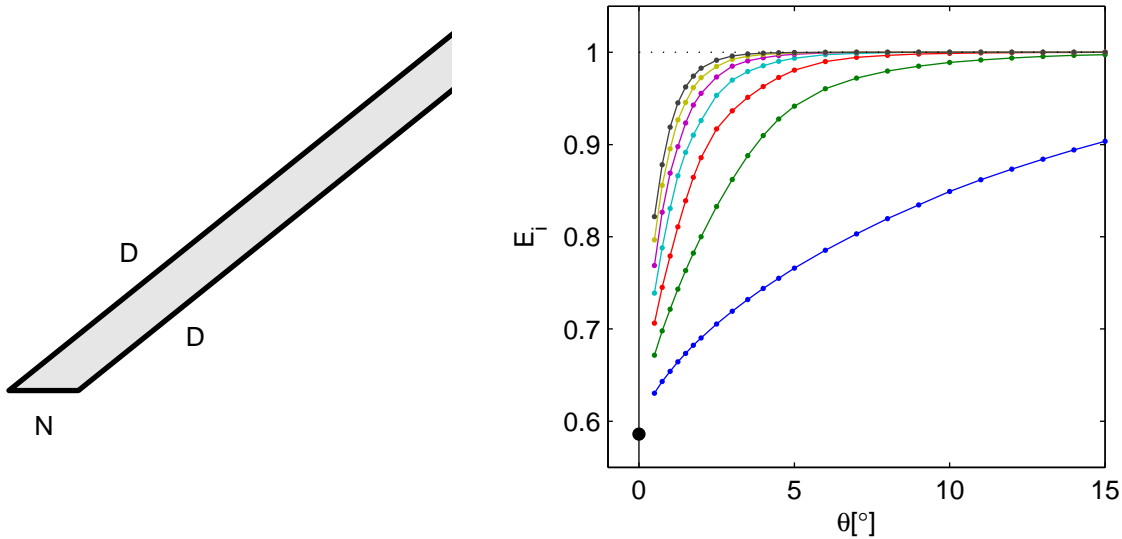}
     \caption{The left figure depicts the reduced 2D region on which the solution is sought. The right one shows the dependence of eigenvalues
     on the opening angle; the first seven of them are shown (colour
     online). The essential spectrum threshold $E=1$ is the accumulation
     point of all of them and they approach it from below as
     $\theta\to \frac12\pi-$. The thick dot marks the value $\lambda_0$ indicated in Proposition~\ref{lowerb}; the picture suggests that the eigenvalues tend to it monotonously as $\theta\to 0+$.
     }\label{fig2}
   \end{center}
\end{figure}

The first question concerns the dependence of eigenvalues on the opening angle illustrated on Fig.~\ref{fig2} in the interval between 1 and 15 degrees. It agrees well with the general picture: the planar layer, $2\theta=\pi$, does not support any eigenvalue below the threshold, while for any $\theta<\pi/2$ there are by Theorem~\ref{disc_spec} infinitely many such isolated eigenvalues, albeit all are weakly bound for sufficiently open conical layers. As the opening angle becomes more acute the low-lying states become gradually more strongly bound. The numerical results show no degeneracies.

\begin{figure}
   \begin{center}
     \includegraphics[width=27em]{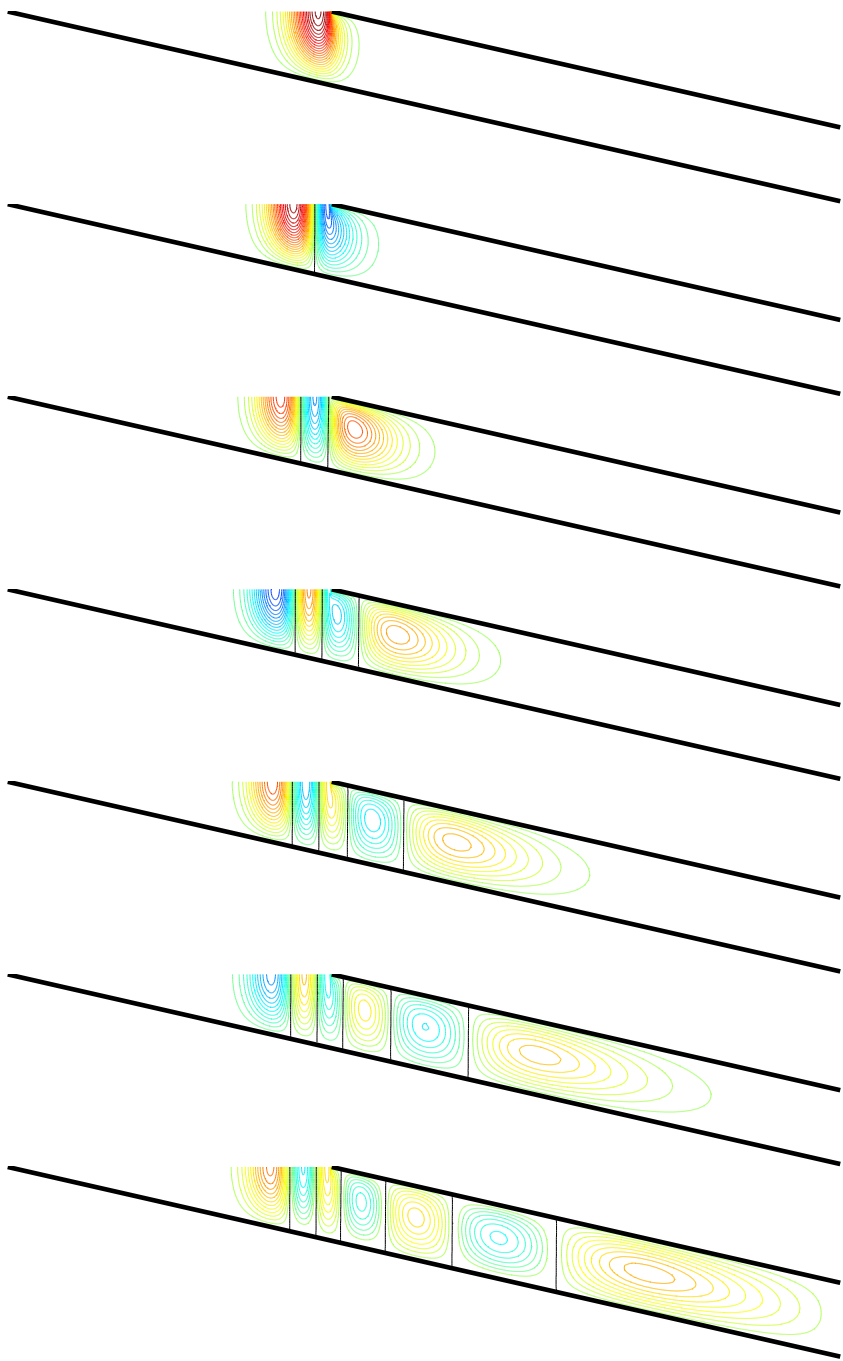}
     \caption{The contour plot of the first seven eigenfunctions
     (colour online) for $\theta=2.5^\mathrm{o}$. In view of the
     axial symmetry, we show only one half of $\psi(r,z)$. The the $r-$ and $z-$scales are different, the vertical axis being scaled by factor five: the transversal width is $\pi$ and $z$ ranges from $0$ to $225$. Nodal
     lines are indicated in black colour; higher eigenfuctions
     have clearly a quasi-onedimensional character.}\label{fig3}
   \end{center}
\end{figure}

As we move away from the ground state, structure of the
eigenfunctions acquires more and more a one-dimensional character. This is demonstrated in Fig~\ref{fig3} when the seven lowest eigenfunctions for the angle $\theta=2.5^\mathrm{o}$ are calculated. As expected, the ground-state eigenfunction can be chosen positive. On the other hand, the excited states display a number of nodal lines (nodal surfaces of the full problem) that is in a one-to-one correspondence with the eigenvalue index of the
sequence $\{\lambda_i(\theta)\}$, in an analogy with the classical oscillation theorem. Notice also that the node distances are increasing with the distance of the cone tip. This behaviour has to be associated with the fact that the effective-curvature induced potential in (\ref{partial4}) is proportional to $s^{-2}$ having thus the same scaling behaviour as the kinetic part of the Hamiltonian.

\begin{figure}
   \begin{center}
     \includegraphics[width=27em]{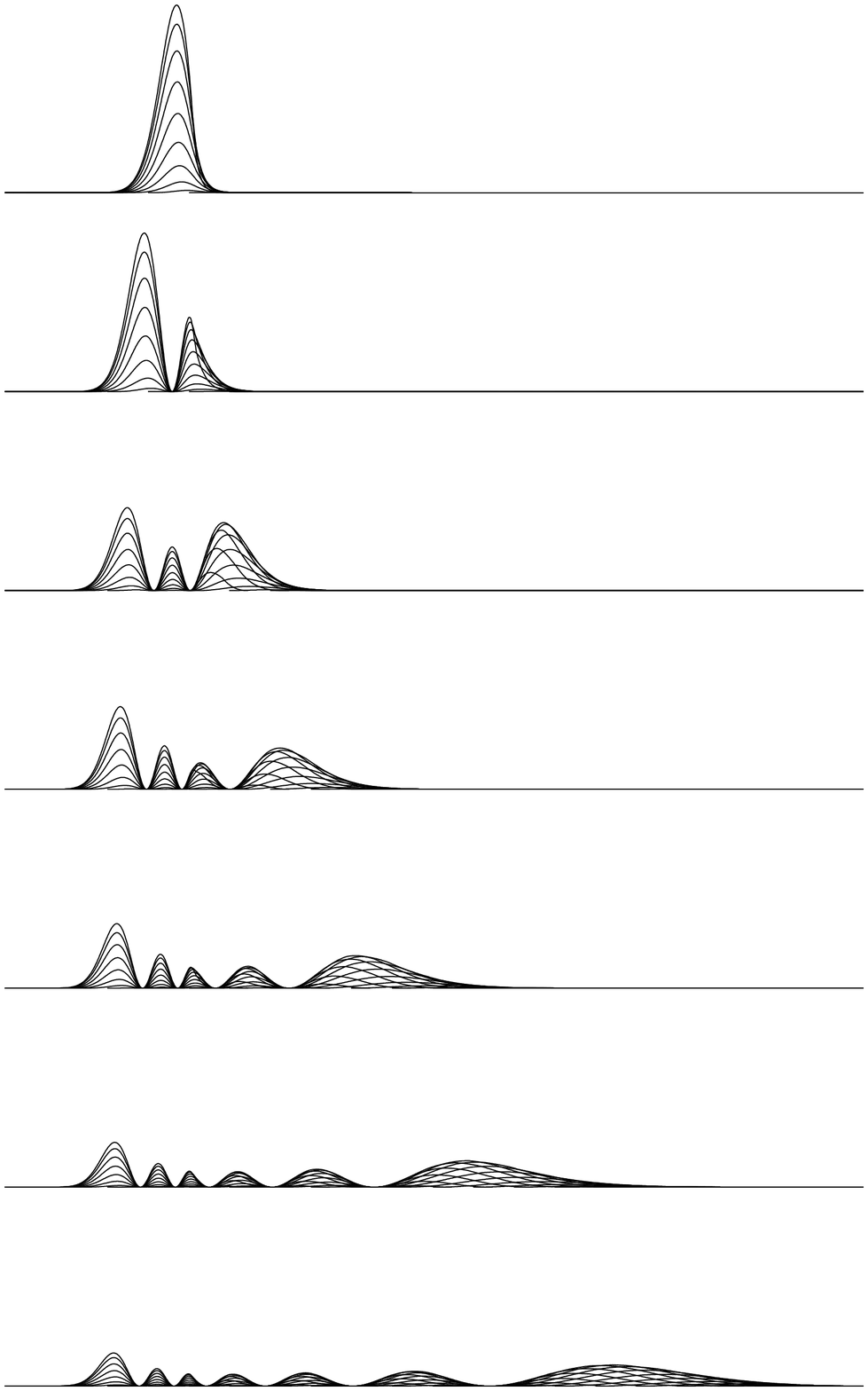}
     \caption{Side view (\emph{i.e.}, with zero elevation angle) of
$|\psi|^2$ for the same situation as in Fig. 3. All the
eigenfunctions are normalized to the same constant, so that
relative height ratios reflect the true situation (not distorted
by different normalization of individual functions). There are two
areas with higher probability, one in the vicinity of the inner
tip and the other one at the ``far end'' of the function extent.
The latter moves away from the vertex as the eigenvalue number
increases.}\label{fig4}
   \end{center}
\end{figure}

Another observation worth commenting is shown on Fig.~\ref{fig4},
where we have plotted (a side view of the) squared eigenfunctions
$|\psi^i|^2$ with $\,i=1,\ldots,7\,$ for the same opening angle
$\theta$ as above. As the index $i$ grows, dominant regions lie
around the inner tip and at the ``far end'' of eigenfunction
extent, more exactly, between the last two nodal lines.
Consequently, location of the regions with the largest probability
of finding the particle in the conical layer depends on the
eigenvalue number in this way.

\section{Concluding remarks}\label{s: concluding}

Our discussion has not exhausted by far all open questions
concerning spectra of conical layers. We have also mentioned the
eigenvalue monotonicity which is expected to hold and supported by
the above numerical result. A related question concerns the
simplicity of the spectrum. It is again seen in the numerical
example and its validity is strongly supported by the
quasi-one-dimensional character of the higher eigenfunctions, but
these considerations cannot replace a rigorous proof.

Other open questions are associated with the behaviour for extreme
values of $\theta$ which we left out in the previous section
because these cases are numerically demanding.  For small values
of $\theta$ the density of eigenvalues above $\lambda_0\approx
0.58596$ should grow in accordance with Proposition~\ref{lowerb}.
What we see nevertheless is that already for
$\theta=2.5^\mathrm{o}$ the excited states have nodal lines (or
nodal surfaces of the full problem) inside the cone ``cap''; one
can ask about critical values of $\theta$ for which one of those
touches the inner tip. On the other hand, one can ask about the
spectral asymptotics as $\theta\to \frac12\pi-$. In contrast to
\cite{EK01} the geometric perturbation is not compactly supported
for conical layers and the effective potential is on the
borderline between short- and long-range cases, so it is not a
priori clear what the asymptotic expansion could be.

Of interest is also the continuous spectrum of conical layers,
both from the point of view of scattering --- existence of wave
operators and their asymptotic completeness --- as well as of
resonances. Being bound by the deadline of this issue, we leave
all these questions to a later investigation.

\section*{Acknowledgments}
We dedicate this paper to the memory of Pierre Duclos, a longtime friend and collaborator, who was always interested in geometrically induced spectral effects. We thank the referees for the remarks which helped us to improve the text. The research was supported by the Czech Ministry of Education,
Youth and Sports within the project LC06002.

\section*{References}

\end{document}